\documentclass[conference]{IEEEtran}
\IEEEoverridecommandlockouts
% The preceding line is only needed to identify funding in the first footnote. If that is unneeded, please comment it out.

\usepackage{cite}
\usepackage{amsmath,amssymb,amsfonts}
\usepackage{multirow}

\usepackage{algorithmic}
\usepackage{graphicx}
\usepackage{textcomp}
\usepackage{xcolor}

\usepackage{cancel}
\newtheorem{lemma}{Lemma}
\newtheorem{Corollary}{Corollary}
\newtheorem{proof}{Proof}
\usepackage{setspace}

\usepackage{natbib}
\setcitestyle{numbers,square}

\def\BibTeX{{\rm B\kern-.05em{\sc i\kern-.025em b}\kern-.08em
    T\kern-.1667em\lower.7ex\hbox{E}\kern-.125emX}}
\begin{document}

% \title{Conference Paper Title*\\
% {\footnotesize \textsuperscript{*}Note: Sub-titles are not captured in Xplore and
% should not be used}
% \thanks{Identify applicable funding agency here. If none, delete this.}
% }

\title{A Novel Channel-Constrained Model for 6G Vehicular Networks with Traffic Spikes}

\author{

\IEEEauthorblockN{1\textsuperscript{st} Ke Deng, 2\textsuperscript{nd} Zhiyuan He, 3\textsuperscript{rd} Haohan Lin, 4\textsuperscript{th} Hao Zhang, 5\textsuperscript{th} Desheng Wang}
\IEEEauthorblockA{\textit{Wuhan National Laboratory for Optoelectronics} \\
\textit{Huazhong University of Science and Technology}\\
Wuhan, China \\
sheepsui\@hust.edu.cn, zedyuanhe\@hust.edu.cn, M202272477\@hust.edu.cn, haoz52\@hust.edu.cn, dswang\@hust.edu.cn}

% \IEEEauthorblockN{2\textsuperscript{nd} Zhiyuan He}
% \IEEEauthorblockA{\textit{WNLO} \\
% \textit{HUST}\\
% Wuhan, China \\
% zedyuanhe\@hust.edu.cn}
% \and

% \IEEEauthorblockN{3\textsuperscript{rd} Haohan Lin}
% \IEEEauthorblockA{\textit{WNLO} \\
% \textit{HUST}\\
% Wuhan, China \\
% M202272477@hust.edu.cn}
% \and

% \IEEEauthorblockN{4\textsuperscript{th} Hao Zhang}
% \IEEEauthorblockA{\textit{WNLO} \\
% \textit{HUST}\\
% Wuhan, China \\
% haoz52\@hust.edu.cn}
% \and

% \IEEEauthorblockN{5\textsuperscript{th} Desheng Wang}
% \IEEEauthorblockA{\textit{WNLO} \\
% \textit{HUST}\\
% Wuhan, China \\
% dswang\@hust.edu.cn}

}

\maketitle

\begin{abstract}
Mobile Edge Computing (MEC) holds excellent potential in Congestion Management (CM) of 6G vehicular networks. A reasonable schedule of MEC ensures a more reliable and efficient CM system. Unfortunately, existing parallel and sequential models cannot cope with scarce computing resources and constrained channels, especially during traffic rush hour. In this paper, we propose a channel-constrained multi-core sequential model (CCMSM) for task offloading and resource allocation. The CCMSM incorporates a utility index that couples system energy consumption and delay, applying Genetic Algorithm combining Sparrow Search Algorithm (GA-SSA) in the branching optimization. Furthermore, we prove that the system delay is the shortest with the FCFS computing strategy in the MEC server. Simulation demonstrates that the proposed CCMSM achieves a higher optimization level and exhibits better robustness and resilient scalability for traffic spikes. \end{abstract}

\begin{IEEEkeywords}
6G, mobile edge computing, vehicular networks, resource allocation, computing offloading, optimization
\end{IEEEkeywords}

\section{Introduction}
The emergence of 6G communication technology is expected to further enhance the capabilities of vehicular networks by offering ultra-high data rates, ultra-low latency, and massive device connectivity which enable a plethora of novel applications and services, such as autonomous driving, real-time traffic management, and advanced vehicle-to-everything (V2X) communication\cite{b1}. 
However, the increasing demand for high-bandwidth applications especially CM poses the challenge of channel constraints, particularly during traffic rush hour\cite{b1, b2}. Given the high granularity of CM tasks, MEC is required to divide the service of data collection including analysis into sub-tasks and distribute them across computing nodes, ensuring efficient processing and responding to the dynamic of traffic conditions\cite{b35}. Despite the benefits of MEC in reducing computing and transmission costs and promoting the development of environment-friendly vehicular networks\cite{b36, b37}, task scheduling in MEC presents challenges due to various factors, such as disparities in base station access and computing capabilities, spatial heterogeneity of terminals, and diverse task demands\cite{b4}. 

Therefore, an urgent need exists for practical scheduling models that are tailored to real-world vehicular network scenarios. Many researchers currently focus on task offloading and resource allocation in MEC. In previous work\cite{b14}, the parallel scheduling model is adopted to simplify the problem. However, the parallel model assumes an unconstrained server cluster where simultaneous workloads are not limited, which contradicts the finite number of MEC cores and indivisible computing frequencies in reality. The latest research \cite{b20} introduces a multi-core sequential scheduling model, which is more suitable for computing hardware. However, the traffic spike caused by transportation gridlock challenges the reality of limited channel resources in CM\cite{b6}. This constraint leads to a necessary queue for task offloading, introducing a complex coupling relationship between it and the computing queue in MEC. Overall, the existing model falls short of accurately modeling the intricacies of MEC in 6G vehicular networks, rendering it far from deriving relevant insights for deployment guidance. 

Furthermore, a comprehensive task scheduling model in CM necessitates an effective algorithm that is tailored to the heterogeneous optimization variable. 
Reference \cite{b22} demonstrates the great potential of the GA in rationally determining offloading strategies in MEC, proving the GA's feasibility in task scheduling. However, GA is not suitable for continuous variables \cite{b22}. Regardless of that Mohamed \cite{b24} et al. applied the SSA to the distribution network's performance optimization, proving its feasibility and accuracy in optimizing continuous power variables, the queue of task offloading and computing brings the optimization coupled constraints.

Generally, to advance the deployment of 6G vehicular networks through optimal task scheduling in MEC, we make the following contributions:

\begin{itemize}
\item We propose the CCMSM by introducing an offloading queue of terminal tasks, which is essential for MEC deployment with a single base station and multiple terminals in CM.
\item A novel utility index that balances the energy consumption and delay in optimization was proposed by introducing a preference coefficient. 
\item A series of experiments were conducted to demonstrate that Genetic Algorithms - Sparrow Search Algorithms (GA-SSA) can serve as a competent algorithm for CCMSM.
\item We proved that the system delay is the shortest when the computing queue of the MEC core is given by the FCFS criterion to simplify and accelerate the algorithm.
\end{itemize}

The rest of this paper is as follows. Section II presents the system model, followed by the formulation of a combinatorial optimization problem. The algorithmic implementation is discussed in Section III. Section IV validates the proposed CCMSM through a series of simulation experiments, demonstrating its feasibility and robustness. Finally, conclusions are drawn in Section V.

\section{System Model and Problem Formulation}\label{System Model and Problem Formulation}

\subsection{System Modelling}\label{Systems Modeling}
The architecture of the CCMSM is shown in Fig.~\ref{fig1}.

In CCMSM, there are $N$ terminals in the system, of which the $i^{th}$ terminal contains $M_i$ tasks, represented by task set $W_i=\{w_{i1},w_{i2},...,w_{ij},...,w_{iM_i}\}$. To facilitate mathematical modeling, we set the upper limit of task number for each terminal to $L$, that is, $M_i\le L$. 

Due to the received data after computation offloading being much less than the original data\cite{b25}, CCMSM only models the uplink. In the calculation offloading, the information of task $w_{ij}$ can be represented by the tuple $\{u_{ij}, c_{ij}\}$, where $u_{ij}$ is the data volume and $c_{ij}$ represents the required calculation cycle.

The MEC edge server comprises $K$ computing cores that handle the offloaded  task set $Q^k$ transferred by terminals. Conversely, tasks calculated in terminals can be regarded as settled on a non-existent core with a mark of $0$ in MEC, which can be expressed as the subtraction of the total task set and the offloaded task set, that is $Q^0=\bigcup_{i=1}^{N}W_i-\bigcup_{k=1}^{K}Q^k.$ Furthermore, it is noteworthy that the computation and transmission of tasks is sequential, which  is decided by the reality of channel resource and computing hardware.

\begin{figure}[htbp]
\centerline{\includegraphics[height=3.5cm,width=0.45\textwidth]{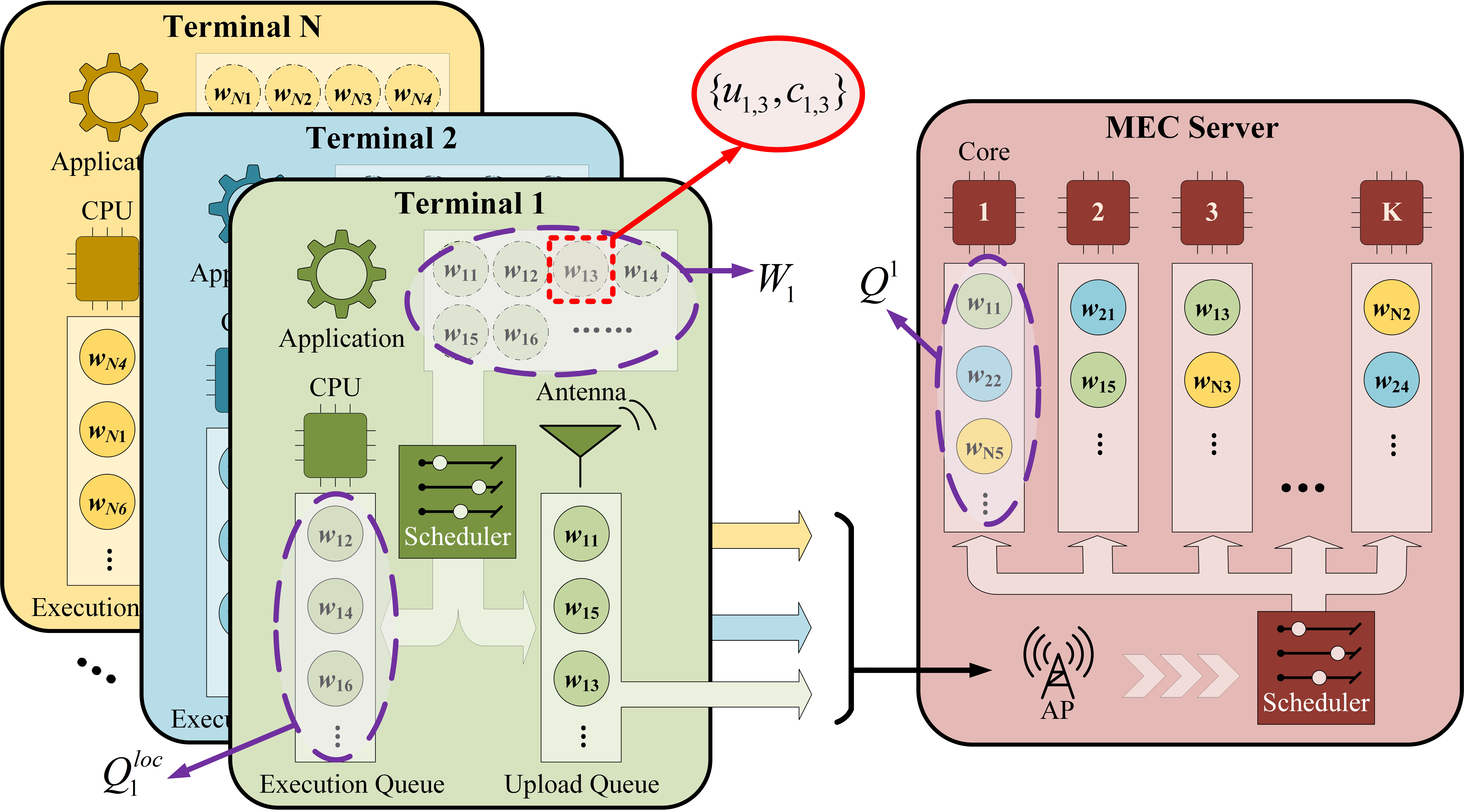}}
\caption{Channel-constrained multi-core sequential model.}
\label{fig1}
\end{figure}

\subsection{Optimization Goals}\label{Optimization Goals}

The essence of task offloading and resource allocation in MEC is seeking the optimal strategy to reduce the system delay and energy consumption. Therefore, we model these two parameters in section \ref{Delay} and \ref{Power Consumption}. In section \ref{Utility Index}, we design a utility index to couple them, transforming the task offloading and resource allocation in the CCMSM into a single-objective optimization. Due to the complexity of the optimization problem, we divide the overall optimization into two parts according to the greedy strategy and process them sequentially.

\subsubsection{Delay}\label{Delay}

In CCMSM, the delay is divided into the calculation and transmission part, in which computing delay of each task $t_{ij}^{exe}$ in terminal $i$ can be formulated as:

{\scriptsize
\begin{align}
  t_{ij}^{exe} =
  \begin{cases}
    c_{ij} / f_i^{loc}, & \forall w_{ij}\in Q^0 \\
    c_{ij} / f^{core}, & \forall w_{ij}\in\bigcup_{k=1}^{K}Q^k
  \end{cases}
  \label{eq1}
\end{align}
}

Where the $f_i^{loc}$ and $f^{core}$ are the CPU frequency of terminals and MEC cores. And the transfer delay of offloading task $\tau_{ij}^{ul}$ in terminal $i$ is formulated as:

{\scriptsize
\begin{align}
  \tau_{ij}^{ul} =
  \begin{cases}
    \frac{u_{ij}}{W^{ul}\log_2{(1+ \frac{p_i^{loc}h_i}{N_0W^{ul}})}} , & \forall w_{ij}\in\bigcup_{k=1}^{K}Q^k\\
    0, & \forall w_{ij}\in Q^0 
  \end{cases}
  \label{eq2}
\end{align}
}

Where $W^{ul}$ represents the bandwidth of the uplink, $p_i^{loc}$ is the transmission power of terminals, $h_i$ represents the channel gain of the uplink and $N_0$ sets the power spectral density of the channel noise. For systematic delay modelling, the transfer delay of tasks calculated in terminals is denoted as 0.

The offloaded tasks will be transferred to computing cores, whose arrival order will affect the total delay of cores, named offloading delay $T_k^{core}$. When modelling it, the arrival time $t_{ij}^{ul}$ is crucial which can be calculated through the transfer delay $\tau_{ij}^{ul}$ and the cumulative transfer delay of the pre-task determined by the offloading queue. We define the offloading queue of tasks in terminal $i$ is $\boldsymbol{q^{ul}_i}$ and record a vector $\boldsymbol{\tau_i}=(\tau_{i1}^{ul},\tau_{i2}^{ul},\ldots,\tau_{ij}^{ul},\ldots,\tau_{iL}^{ul})$ containing tasks’ transfer delay of terminal $i$. Besides, the queue function $F_{rank}$ is defined to sort the $\boldsymbol{\tau}i$ in the order of $\boldsymbol{q^{ul}_i}$. By cumulative summing up the rearranged vector, the arrival time $\boldsymbol{t_i}$ of tasks in terminal $i$ can be obtained as:

{\scriptsize
\begin{align}
\boldsymbol{t_i} = F_{rank}\left(\boldsymbol{\tau_i},\boldsymbol{q^{ul}_i} \right)^T \left(\begin{matrix}1&1&\ldots&1\\0&1&\ldots&1\\0&0&\ddots&\vdots\\0&0&0&1\\\end{matrix}\right)
  \label{eq3}
\end{align}
}
\begin{figure}[htbp]
\centerline{\includegraphics[height=2.4cm,width=0.49\textwidth]{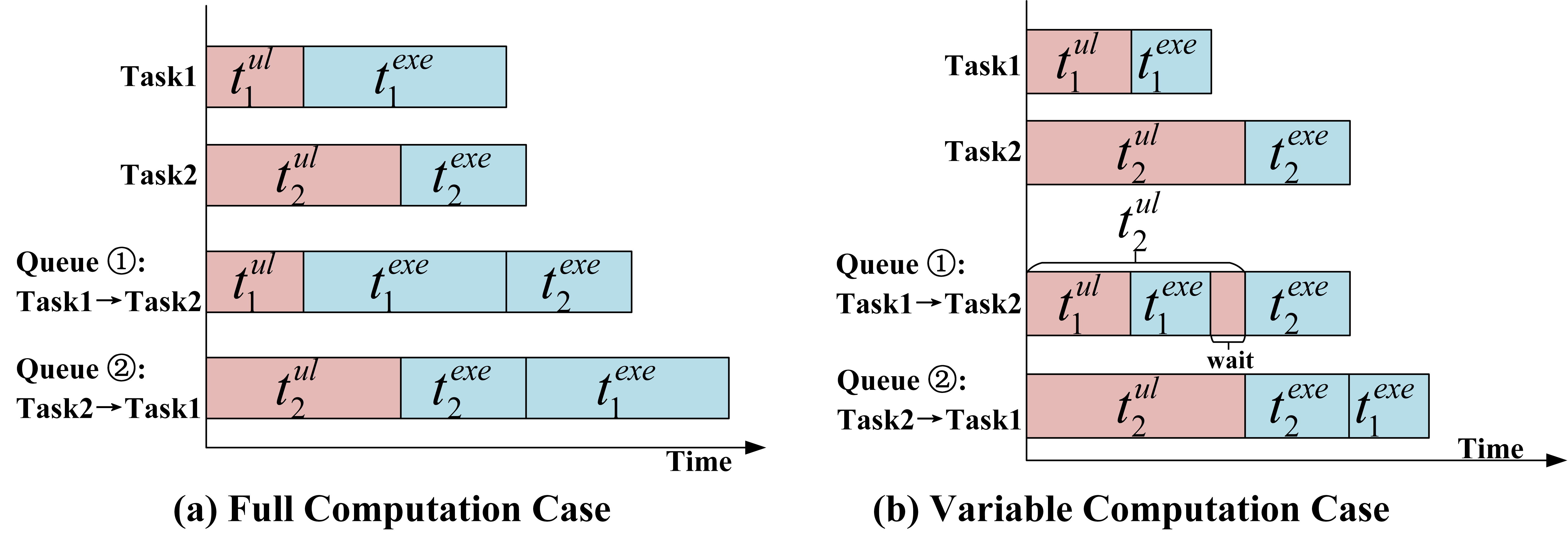}}
\caption{Basic tasks scenario in execution. (a) Full Computation Case. (b) Variable Computation Case.}
\label{fig2}
\end{figure}

Here, the cumulative sum is processed by matrix multiplication. To analyze the computing queue of MEC cores, two basic scenarios named full computation case and variable computation case of the core are analyzed. When Task 1 and Task 2 are in the computing queue of a core, the basic scenario is decided by the relationship of their arrival time $t_1^{ul}$,$t_2^{ul}$, and computing delay $t_1^{exe}$,$t_2^{exe}$:

\begin{itemize}
    \item 	When the arrival time of Task 2 is less than the full delay $t_1^{ul}+t_1^{exe}$ of Task 1, the base scenario is a Full Computation Case. In this scenario, if the computing queue is Queue \textcircled{\small{1}}, Task 2 will have reached the core when Task 1 is computed and the total delay for the core is $t_1^{ul}+t_1^{exe}+t_2^{exe}$. Similarly, if the computing queue is Queue \textcircled{\small{2}}, Task 1 will have reached the core when Task 2 is computed, and the total delay of the core is $t_2^{ul}+t_2^{exe}+t_1^{exe}$. The schematic diagram of the core calculation is shown in Fig.~\ref{fig2} (a).

    \item 	When the arrival time of Task 2 is large than the full delay of Task 1, the base scenario is a Variable Computation Case. In this scenario, if the computing queue is Queue \textcircled{\small{1}}, Task 2 will still be transferred when Task 1 is computed and the total delay for the core is $t_2^{ul}+t_2^{exe}$. Conversely, if the computing queue is Queue \textcircled{\small{2}}, Task 1 will have reached the core when Task 2 is computed, and the total delay of the core is $t_2^{ul}+t_2^{exe}+t_1^{exe}$. The schematic diagram of the core calculation is shown in Fig.~\ref{fig2} (b).    
\end{itemize}

Building on the insights gained from the basic scenarios, we generalize and extend our findings to a broader range of computational situations. The recursive function $F_{rec}$ is defined to realize the delay recursion of tasks calculation offloading in core k whose operation can be summarized as comparative analysis and iterative computation. The arrival time of tasks offloaded from terminals to core k is denoted as a vector $\boldsymbol{t_k^{ul}}$, the computing delay of cores is denoted as a vector $\boldsymbol{t_k^{exe}}$, and the core calculation sequence is $\boldsymbol{q^k}$. Therefore, the actual calculation time $T_k^{core}$ of the core $k$ satisfies:

{\scriptsize
\begin{align}
T_k^{core}=F_{rec}\left((F_{rank}\left(\boldsymbol{t_k^{ul}},\boldsymbol{q^k}\right),F_{rank}\left(\boldsymbol{t_k^{exe}},\boldsymbol{q^k}\right)\right)
  \label{eq4}
\end{align}
}
Similarly, the terminal delay $T_i^{loc}$ can be denoted by terminal task queue $\boldsymbol{q_i^{loc}}$ and vector of terminal’s computing delay $\boldsymbol{\widetilde{t_i^{exe}}}$, that is:

{\scriptsize
\begin{align}
T_i^{loc}=F_{rec}\left((F_{rank}\left(\boldsymbol{0},\boldsymbol{q^{loc}_i}\right),F_{rank}\left(\boldsymbol{\widetilde{t_i^{exe}}},\boldsymbol{q^{loc}_i}\right)\right)
  \label{eq5}
\end{align}
}

The total system delay $T_{sys}$ was inferred by the max value of $T_i^{loc}$ and $T_k^{core}$ which is the delay in each terminal and offloading delay of each MEC computing core respectively. In other words, it’s a Min-Max problem. Therefore, the total system delay $T_{sys}$ is:

{\scriptsize
\begin{align}
    T_{sys}=\max\left(T_1^{core},T_2^{core},...,T_1^{loc},T_2^{loc},...\right)
  \label{eq6}
\end{align}
}

\subsubsection{Power Consumption}\label{Power Consumption}

In CCMSM, the total energy consumption of the system is the sum of the computing part and the offloading part, so we convert this problem into the minimize energy consumption. The computing energy consumption of each task in terminal $i$ can be formulated as:

{\scriptsize
\begin{align}
    e_{ij}^{exe} =
        \begin{cases}
            k_i(f_i^{loc})^2c_{ij}, & \forall w_{ij}\in Q^0 \\
            k_i(f_k^{core})^2c_{ij}, & \forall w_{ij}\in\bigcup_{k=1}^{K}Q^k
        \end{cases}
  \label{eq7}
\end{align}
}

Where $k_i$ is the power constant related to the CPU in terminals and cores. Similarly, the transfer energy consumption of the tasks in terminal $i$ can be defined as:

{\scriptsize
\begin{align}
    e_{ij}^{exe} =
        \begin{cases}
            \frac{p_i^{loc}u_{ij}}{ W^{ul}\log_2\left(1+\frac{p_i^{loc}}{N_0W^{ul}}\right)}, & \forall w_{ij} \in \bigcup_{k=1}^{K}Q^k \\
            0, & \forall w_{ij}\in Q^0
        \end{cases}
  \label{eq8}
\end{align}
}

In Eq. \eqref{eq8}, the transfer energy consumption of the task computed in terminals is recorded as $0$. Due to the independence of the computing energy consumption and computing queue, the task computing energy consumption of the system can be expressed as $E^{exe}=\sum\sum e_{ij}^{exe}$. As the transfer energy consumption is also independent of the offloading queue, the task transfer energy consumption of the system can be denoted as $E^{ul}=\sum\sum e_{ij}^{ul}$. Therefore, the total energy consumption of the system can be expressed as $ E=E^{exe}+E^{ul}$.

\subsubsection{Utility Index}\label{Utility Index}
To simultaneously optimize the system's energy consumption and delay, a utility index is designed to couple them. By introducing coefficient $\lambda$ to adjust the optimization’s preference, the multi-objective problem is transformed into a single-objective problem, which is denoted as:

{\scriptsize
\begin{align}
    U=\lambda E+(1-\lambda)T_{sys}
  \label{eq12}
\end{align}
}
Where $\lambda\in[0,1]$. Finally, the optimization in CCMSM is formulated as:

{\scriptsize
\begin{align}  
	\underset{\boldsymbol{q^k},\boldsymbol{q_{i}},\boldsymbol{q^{loc}}, p^{loc}_i}{\mathrm{minimize}}  \quad& U \label{eq13}\\
	\notag
	\text{subject to:}  \quad& p^{loc}_{min} \leq p^{loc}_i \leq p^{loc}_{max}  \tag{10a}\label{eq13:con1}\\
	\quad& \bigcap_{k=0}^{K}Q^k=\emptyset \tag{10b}\label{eq13:con2}\\
	\quad& \bigcup_{k=0}^{K}Q^k=\bigcup W_i \tag{10c}\label{eq13:con3}\\
	\quad& Q_i^{loc}=Q^0\bigcap W_i \tag{10d}\label{eq13:con4}
\end{align}
}

Where the Eq. \eqref{eq13:con1} represents the transmission power constraint of the terminal, the maximum and minimum transmit power are denoted as $p^{loc}_{max}$ and $p^{loc}_{min}$. Eq.\eqref{eq13:con2} to \eqref{eq13:con4} represents the constraints of the offloading strategy, which satisfy the independence of each task and the uniqueness of each strategy.

\subsubsection{Decoupling of Optimization Objectives}

Since the optimization variables $\boldsymbol{q_i^{ul}}\boldsymbol{q_i^{loc}}$ and $\boldsymbol{q^k}$ are discrete, while $p^{AP}$ is continuous, the optimization in section \ref{Utility Index} is a Mixed Integer Nonlinear Programming problem (MINLP)\cite{b27}, which is usually considered as an NP-hard problem \cite{b28}. Therefore, it is necessary to reform and simplify the original problem. Generally, the discrete variable is relaxed into a continuous variable before optimization with a single algorithm. The solution of the continuous variable can be obtained by bringing the solution of the discrete variable into the original problem\cite{b31}. However, relaxation will convert the original problem's infeasible regions into continuous regions, leading to convergence to a local optimum instead of a global one, which is unacceptable for task scheduling.

Fortunately, when the optimal solution of MINLP problem has apparent characteristics\cite{b32}, applying a greedy strategy can lead to a good solution. To simplify the optimization, we apply a greedy strategy to decouple it into two parts: the optimization of offloading strategy and power allocation. The first part is formulated as P1:

{\scriptsize
\begin{align}
    \text{P1:}\quad 
	\underset{\boldsymbol{q^k},\boldsymbol{q_{i}},\boldsymbol{q^{loc}}, }{\mathrm{minimize}}  \quad& U \label{eq14}\\
	\notag
	\text{subject to:}
	\quad& \bigcap_{k=0}^{K}Q^k=\emptyset \tag{11a}\label{eq:con2}\\
	\quad& \bigcup_{k=0}^{K}Q^k=\bigcup W_i \tag{11b}\label{eq:con3}\\
	\quad& Q_i^{loc}=Q^0\bigcap W_i \tag{11c}\label{eq:con4}
\end{align}
}

Then, the optimized offloading strategy will be used as a parameter to optimize the power, which is denoted as P2:

{\scriptsize
\begin{align}
    \text{P2:}\quad 
    \underset{p^{loc}_i}{\mathrm{minimize}} \quad& U \label{eq15}\\
	\notag
	\text{subject to:}  \quad& p^{loc}_{min} \leq p^{loc}_i \leq p^{loc}_{max}  \tag{12a}\label{eq:con1}
\end{align}
}
\section{Algorithm}
\subsection{Encoding}
In the CCMSM, the GA is applied for P1 and SSA for P2. It should be noted that the encoding process, which reflects the core principles of the scheduling and enables the algorithm to be effectively integrated with the model, is the focus of the proposed GA-SSA. Thus, a genetic encoding scheme is proposed to adapt the strategy to the optimization process of the algorithm.

As the utility index will be affected by the offloading queue and the mapping of computing cores, which is similar to the gene in structure in P1. They are taken as the variables of the algorithm and participate in the iterative optimization process.

\begin{figure}[htbp]
\centerline{\includegraphics[height=4cm,width=0.48\textwidth]{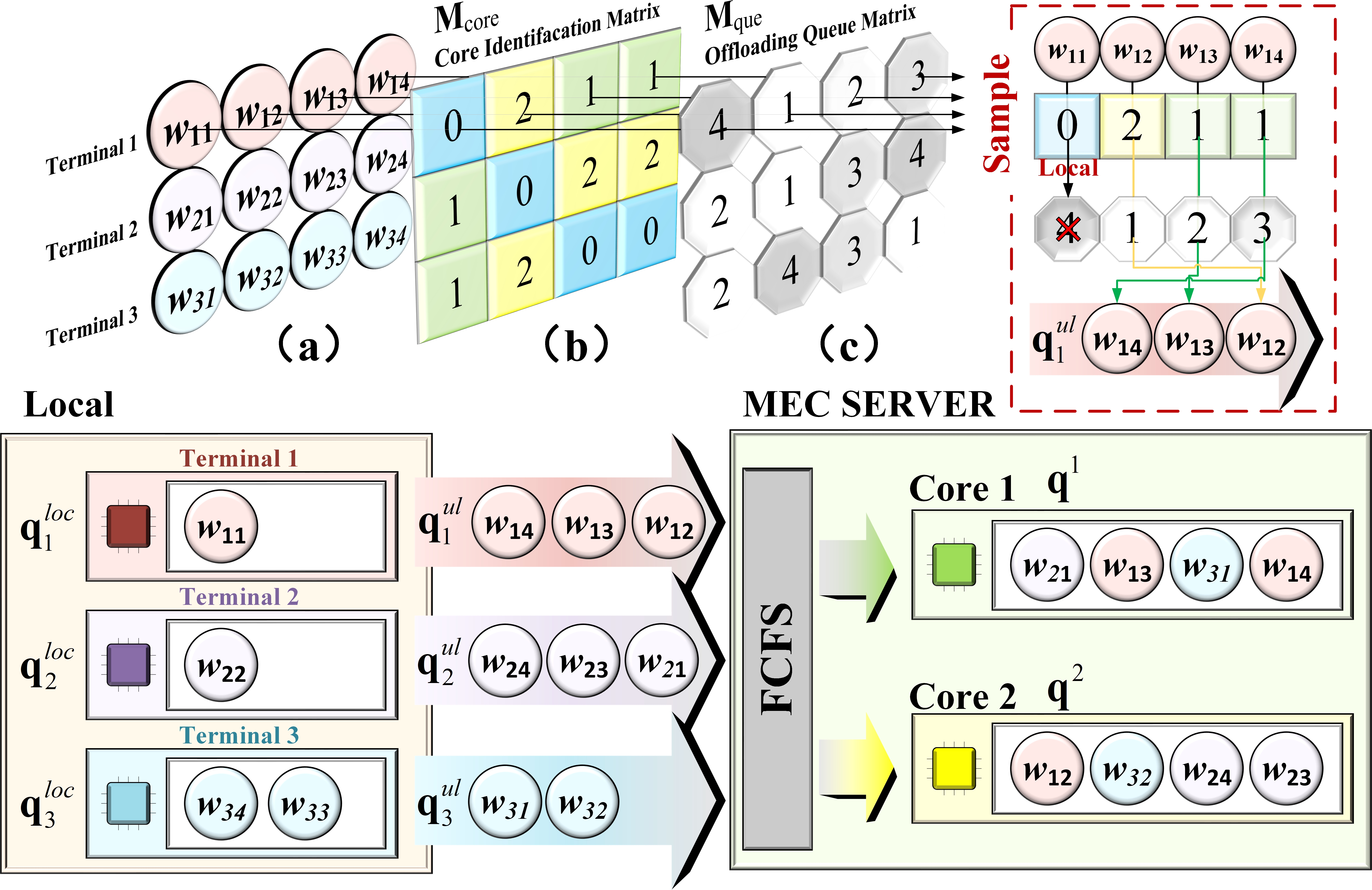}}
\caption{Example diagram of encoding in offloading}
\label{fig3}
\end{figure}

Fig.~\ref{fig3} shows a typical process of offloading encoding. First, the task $w_{ij}$ of terminals will be arranged in rows and taken to form a matrix $\boldsymbol{M_{que}}$ in collaboration with the offloading queue. Simultaneously, tasks are initialized on different cores to form a core identity matrix $\boldsymbol{M_{core}}$. E.g., considering that task $w_{11}$ is computed in the terminal, determined by the $M_{core,1,1}=0$, the $\boldsymbol{q_1^{ul}}$ of Terminal 1 is $w_{12}\rightarrow w_{13}\rightarrow w_{14}$ which is decided by the $M_{que,1,:}$. As for others, the scheduling is the same as Terminal 1. By repeating the above operation, the task is $w_{21}\rightarrow w_{13}\rightarrow w_{31}\rightarrow w_{14}$ in core 1 and $w_{12}\rightarrow w_{32}\rightarrow w_{24}\rightarrow w_{22}$ in core 2.

When the offloading queue and the mapping relation of cores are determined, the computing queue of the terminal and the core needs to be further determined. As the Eq. \eqref{eq5} and \eqref{eq8} in section \ref{Optimization Goals}, the computing queue does not affect the terminal delay and energy consumption of the system, which will be ignored in modeling. In contrast, the computing queue of cores is critical to the core computing delay which should be minimized. Typically, the First-Come-First-Serve (FCFS) strategy is commonly considered to be optimal, as it minimizes the computing delay and ensures that no task is indefinitely delayed which simplifies and accelerates the algorithm. The proof is presented in the next section.

\subsection{Computing Queue Generation}

To get the computing delay of the core, we need the following steps to determine that the computing queue is governed by the FCFS strategy of cores.

\begin{lemma}
    When it exists Task 1, Task 2 in the core, with the arrival time satisfying $t_{ul1}\le t_{ul2}$, the total computing delay of the core is the minimum if the computing strategy is FCFS.
\end{lemma}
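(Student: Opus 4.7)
The plan is to carry out a direct case analysis that reuses the Full Computation Case versus Variable Computation Case dichotomy already established in Section \ref{Delay}, and compare the total core delay under the FCFS order (Queue \textcircled{\small{1}}: Task 1 then Task 2) against the reversed order (Queue \textcircled{\small{2}}: Task 2 then Task 1). Since the lemma only concerns two tasks on one core, no exchange-to-schedule argument is needed; the proof reduces to verifying an inequality in each branch of the case split.

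First I would split on the relative magnitude of $t_{ul2}$ and $t_{ul1}+t_{exe1}$. In the Full Computation Case, where $t_{ul2}<t_{ul1}+t_{exe1}$, the expressions derived in Section \ref{Delay} give core delays of $t_{ul1}+t_{exe1}+t_{exe2}$ under FCFS and $t_{ul2}+t_{exe2}+t_{exe1}$ under the reversed order; subtracting, the FCFS delay is smaller by $t_{ul2}-t_{ul1}\ge 0$, which follows immediately from the hypothesis $t_{ul1}\le t_{ul2}$. In the Variable Computation Case, where $t_{ul2}\ge t_{ul1}+t_{exe1}$, the FCFS delay collapses to $t_{ul2}+t_{exe2}$ because Task 1 has already been computed and the core idles waiting for Task 2, whereas the reversed order still costs $t_{ul2}+t_{exe2}+t_{exe1}$; the gap is precisely $t_{exe1}>0$.

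Combining the two branches yields that the FCFS total core delay is no larger than the delay of the reversed order in every admissible configuration, which is exactly the claim of the lemma. To conclude, I would observe that the equality case arises only when $t_{ul1}=t_{ul2}$ in the Full Computation Case, and that the two cases jointly cover all possibilities since the boundary $t_{ul2}=t_{ul1}+t_{exe1}$ is captured by the Variable Computation Case.

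The main obstacle I anticipate is not algebraic but notational: one must be careful that in the Variable Computation Case under FCFS, the expression $t_{ul2}+t_{exe2}$ is the correct delay, i.e., the core does not incur any residual cost from Task 1 beyond $t_{ul1}+t_{exe1}$. This is precisely guaranteed by the defining inequality $t_{ul2}\ge t_{ul1}+t_{exe1}$, so Task 1 has fully cleared the core before Task 2 even arrives. Making this explicit—and tying it back to the recursive function $F_{rec}$ in Eq.~\eqref{eq4}—is the subtle step that makes the case analysis rigorous rather than merely pictorial.
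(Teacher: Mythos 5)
Your proposal is correct and follows essentially the same route as the paper: both compare the two possible orders using the same case formulas, noting that under the hypothesis $t_{ul1}\le t_{ul2}$ the reversed order always costs $t_{ul2}+t_{exe2}+t_{exe1}$, which dominates the FCFS delay in both the Full and Variable Computation Cases. Your explicit two-branch split on $t_{ul2}$ versus $t_{ul1}+t_{exe1}$ is just a slightly more itemized presentation of the paper's single inequality $T_{21}\geq\max(t_{ul1}+t_{exe1}+t_{exe2},\,t_{ul2}+t_{exe2})$.
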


\begin{proof}
	When the computing queue is Task 1$\rightarrow$Task 2, the total computing delay is:
 
	{\scriptsize
        \begin{align}
		T_{12}=\begin{cases}
			t_{ul1}+t_{exe1}+t_{exe2}, &t_{ul2}<t_{ul1}+t_{exe1}  \\
			t_{ul2}+t_{exe2}, &t_{ul2}\geq t_{ul1}+t_{exe1}
			\end{cases}
	\end{align}
	}
 
	Similarly, when the computing queue is Task 2$\rightarrow$Task 1, the computing delay is:
 
	{\scriptsize
	\begin{align}
		T_{21}=\begin{cases}
			t_{ul2}+t_{exe2}+t_{exe1}, &t_{ul1}<t_{ul2}+t_{exe2}  \\
			t_{ul1}+t_{exe1}, &t_{ul1}\geq t_{ul2}+t_{exe2}
		\end{cases}
	\end{align}
	}
 
	Since $t_{ul1}<t_{ul2}, t_{ul1}\geq t_{ul2}+t_{exe2}$ does not hold, $T_{21}=t_{ul2}+t_{exe2}+t_{exe1}$. For the reason that $t_{ul2}+t_{exe2}+t_{exe1}\geq\max(t_{ul1}+t_{exe1}+t_{exe2},t_{ul2}+t_{exe2})$ always hold, that is $T_{21}\geq T_{12}$. Therefore, \textit{Lemma 1} is held.
	
\end{proof}

\begin{lemma}
    When there are three tasks, Task 1, Task 2, and Task 3, assigned to a core, only Task 3 is determined to be computed at the end, while either Task 1 or Task 2 needs to be selected for computation. The total delay $T_{13}$ of Task 1$\rightarrow$Task 3 always less than or equal to the total delay $T_{23}$  of Task 2$\rightarrow$Task 3 if $t_{ul1}+t_{exe1}\le t_{ul2}+t_{exe2}$.
\end{lemma}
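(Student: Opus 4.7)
The plan is to mirror the structure of the proof of \textit{Lemma 1}: write down the piecewise expressions for $T_{13}$ and $T_{23}$ using the same two-case recursion that was established for a two-task queue, and then dispatch the comparison by splitting on where $t_{ul3}$ falls relative to the two completion times $t_{ul1}+t_{exe1}$ and $t_{ul2}+t_{exe2}$.

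First I would instantiate the formula from \textit{Lemma 1} with the pair (Task 1, Task 3) to obtain $T_{13}=t_{ul1}+t_{exe1}+t_{exe3}$ when $t_{ul3}<t_{ul1}+t_{exe1}$, and $T_{13}=t_{ul3}+t_{exe3}$ otherwise; and similarly with the pair (Task 2, Task 3) to obtain $T_{23}=t_{ul2}+t_{exe2}+t_{exe3}$ when $t_{ul3}<t_{ul2}+t_{exe2}$, and $T_{23}=t_{ul3}+t_{exe3}$ otherwise. This reuses \textit{Lemma 1} cleanly and is the right abstraction, because the lemma is really a monotonicity statement in the quantity ``ready time after the first task'' $=\max(t_{ul},\,t_{ul}+t_{exe}\text{ of the predecessor})+t_{exe}$.

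Next, using the hypothesis $t_{ul1}+t_{exe1}\le t_{ul2}+t_{exe2}$, I would split into exactly three exhaustive subcases according to the location of $t_{ul3}$: (i) $t_{ul3}\ge t_{ul2}+t_{exe2}$, in which case both expressions collapse to $t_{ul3}+t_{exe3}$ and equality holds; (ii) $t_{ul1}+t_{exe1}\le t_{ul3}<t_{ul2}+t_{exe2}$, in which case $T_{13}=t_{ul3}+t_{exe3}<t_{ul2}+t_{exe2}+t_{exe3}=T_{23}$ by the upper bound on $t_{ul3}$; (iii) $t_{ul3}<t_{ul1}+t_{exe1}$, in which case $T_{13}=t_{ul1}+t_{exe1}+t_{exe3}\le t_{ul2}+t_{exe2}+t_{exe3}=T_{23}$ directly from the hypothesis. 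Combining the three subcases yields $T_{13}\le T_{23}$.

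I do not anticipate any real obstacle; the argument is a pure case analysis and each case reduces to a one-line inequality using either the hypothesis or the subcase assumption. The only thing to be careful about is ensuring the subcases are exhaustive under the assumption $t_{ul1}+t_{exe1}\le t_{ul2}+t_{exe2}$, which they are because the real line is partitioned by the two thresholds. A secondary point worth flagging is that the lemma as stated concerns only the two-task suffix ending at Task 3, so the unused ``three tasks'' framing in the statement plays no algebraic role in the proof itself; it is there to set up the inductive use of \textit{Lemma 2} later (presumably to justify an FCFS-style exchange argument on longer queues), so I would write the proof purely in terms of the pairwise comparison and not try to absorb any extra generality.
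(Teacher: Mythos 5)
Your proposal is correct and follows essentially the same argument as the paper: write the piecewise expressions for $T_{13}$ and $T_{23}$ and compare case by case on where $t_{ul3}$ falls relative to $t_{ul1}+t_{exe1}$ and $t_{ul2}+t_{exe2}$. The only cosmetic difference is that you use the hypothesis to partition into three exhaustive intervals up front, whereas the paper traverses all four sign combinations and discards the one contradicting $t_{ul1}+t_{exe1}\le t_{ul2}+t_{exe2}$.
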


\begin{proof}
	When the computing queue is Task 1$\rightarrow$Task 2, the computing delay is:
 
	{\scriptsize
        \begin{align}
		T_{13}=\begin{cases}
			t_{ul1}+t_{exe1}+t_{exe3}, &t_{ul3}\le t_{ul1}+t_{exe1}  \\
			t_{ul3}+t_{exe3}, &t_{ul3}\geq t_{ul1}+t_{exe1}
		\end{cases}
	\end{align}
        }
	
	Similarly, when the computing queue is Task 1$\rightarrow$Task 3, the computing delay is:
 
        {\scriptsize
	\begin{align}
		T_{23}=\begin{cases}
			t_{ul2}+t_{exe2}+t_{exe3}, &t_{ul3}\le t_{ul2}+t_{exe2}  \\
			t_{ul3}+t_{exe3}, &t_{ul3}\geq t_{ul2}+t_{exe2}
		\end{cases}
	\end{align}
	}
 
	Traverse the possible situations of $T_{13}$ and $T_{23}$, when $t_{ul3}<t_{ul1}+t_{exe1}$ and $t_{ul3}<t_{ul2}+t_{exe2}$ are hold, then we have $t_{ul1}+t_{exe1}+t_{exe3}<t_{ul2}+t_{exe2}+t_{exe3}$, it’s obviously that $T_{13}<T_{23}$. 
 When $t_{ul3}<t_{ul1}+t_{exe1}$ and $t_{ul3}\geq t_{ul2}+t_{exe2}$ hold, $t_{ul1}+t_{exe1}<t_{ul2}+t_{exe2}$ is contrary to the definition, which does not hold. 
 When $t_{ul3}\geq t_{ul1}+t_{exe1}$ and $t_{ul3}<t_{ul2}+t_{exe2}$, we can easily get $t_{ul1}+t_{exe1}+t_{exe3}<t_{ul2}+t_{exe2}+t_{exe3}$, that is $T_{13}<T_{23}$.
 When $t_{ul3}\geq t_{ul1}+t_{exe1}$ and $t_{ul3}\geq t_{ul2}+t_{exe2}$ hold, we can get $t_{ul3}+t_{exe3}=t_{ul3}+t_{exe3}$, that is $T_{13}=T_{23}$. Therefore, \textit{Lemma 2} is held, and it can be considered that the core computing delay depends on the preceding tasks when the last tasks in the computing queue are the same.
	
\end{proof}

\begin{Corollary}
   When there are three tasks, Task 1, Task 2, and Task 3, assigned to a core and all need to be computed while the arrival time of them satisfying $t_{ul1}<t_{ul2}<t_{ul3}$, the total computing delay of the core is the minimum if the computing strategy is FCFS.
\end{Corollary}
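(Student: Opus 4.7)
The plan is to prove the corollary by an adjacent-swap exchange argument. With only $3!=6$ permutations of $\{1,2,3\}$, any schedule differs from $(1,2,3)$ by at most three adjacent transpositions, each of which eliminates one arrival-order inversion, so it suffices to show that any such swap does not increase the total core delay.

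The main building block would be an embedded version of Lemma 1: if two consecutive tasks $i,j$ in a longer schedule satisfy $t_{uli} > t_{ulj}$, then exchanging them to place $j$ before $i$ weakly decreases the makespan. To prove this I would let $F$ denote the instant at which the core becomes free after executing all preceding tasks, and then apply the two-branch case split of Lemma 1's proof to the effective arrival times $\max(F, t_{uli})$ and $\max(F, t_{ulj})$ in place of the raw arrivals. The pair's finish time drops, and because the tail's makespan is a monotone function of its starting instant, this drop carries through to the whole schedule. Lemma 2 then serves as a clean check for the two-move cases such as $(2,1,3)$ and $(1,3,2)$ whose last task is already pinned, since it directly characterizes when reordering the preceding pair is beneficial.

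With this embedded swap in hand, I would conclude by bubble-sorting any non-FCFS permutation: iteratively locate an adjacent arrival-order inversion and apply the swap, stopping at $(1,2,3)$ in at most three moves. Since each swap is weakly improving, $T_{\mathrm{FCFS}} \le T_{\pi}$ for every permutation $\pi$ of the three tasks, which is the claim. The main obstacle I anticipate is stating the tail monotonicity cleanly: the embedded Lemma 1 is algebraically identical to the original, but one must explicitly observe that subsequent tasks interact with the swapped pair only through its finish time, so that a weak local improvement propagates to the global objective rather than being accidentally cancelled by later arrivals.
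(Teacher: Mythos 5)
Your proposal is correct, but it follows a genuinely different route from the paper. The paper proves the three-task case by brute-force enumeration: it lists all six permutations, uses Lemma 2 (with the last task pinned, the delay depends only on the preceding pair) to discard the three orders whose first two tasks violate arrival order, and then compares the surviving orders (Task 1$\rightarrow$2$\rightarrow$3, 1$\rightarrow$3$\rightarrow$2, 2$\rightarrow$3$\rightarrow$1) by tabulating and dominating their possible delay values case by case. You instead prove a local exchange lemma -- an embedded Lemma 1 in which the raw arrivals are replaced by effective arrivals $\max(F,t_{ul})$, where $F$ is the core's free time after the prefix -- and bubble-sort any permutation into FCFS, using the fact that each task's completion time is a nondecreasing function $f\mapsto\max(f,t_{ul})+t_{exe}$ of its predecessor's finish time so that a weakly improving local swap propagates to the makespan. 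Both the embedding (the ordering of effective arrivals matches that of raw arrivals, and $\max(\cdot,F)$ is absorbed because every subsequent finish time already exceeds $F$) and the tail-monotonicity step you flag as the main obstacle do go through, so your argument is sound. What it buys is generality and economy: your swap argument immediately yields the $n$-task statement, which the paper only obtains afterwards through the separate induction of Corollary 2, whereas the paper's enumeration is more elementary and leans only on the already-proved Lemmas 1 and 2 without needing the notion of effective arrival times or an explicit monotonicity observation.
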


\begin{proof}
The queue of tasks in the core can be: \\
 {
        \scriptsize
	\begin{tabular}{ccc|cc}
		&Case1:&Task 1$\rightarrow$Task 2$\rightarrow$Task 3 &Case4:&Task 3$\rightarrow$Task 1$\rightarrow$Task 2  \\
		&Case2:&Task 2$\rightarrow$Task 1$\rightarrow$Task 3 &Case5:&Task 2$\rightarrow$Task 3$\rightarrow$Task 1  \\
		&Case3:&Task 1$\rightarrow$Task 3$\rightarrow$Task 2 &Case6:&Task 3$\rightarrow$Task 2$\rightarrow$Task 1  \\
	\end{tabular}
 }

According to Lemma 2, with the same ending task, the computing delay depends on the preceding task. So, case 2, case 4, and case 6 are inferior solutions. By analyzing the remaining cases, the total possible computing delay is

\begin{center}
 
{\scriptsize
	\begin{tabular}{c|c}
		\hline
		Case 1&Case 3\\
  \hline
$t_{ul1}+t_{exe1}+t_{exe2}+t_{exe3}$ & $t_{ul1}+t_{exe1}+t_{exe3}+t_{exe2}$ \\
$t_{ul2}+t_{exe2}+t_{exe3}$ & $t_{ul3}+t_{exe3}+t_{exe2}$\\
$t_{ul3}+t_{exe3}$  &$\bcancel{t_{ul2}+t_{exe2}}$\\
\hline

Case 5&\\
\hline
$t_{ul2}+t_{exe2}+t_{exe3}+t_{exe1}$&\\
$t_{ul3}+t_{exe3}+t_{exe1}$&\\
$\bcancel{t_{ul1}+t_{exe1}}$&\\
\hline
	\end{tabular}
 }
\end{center}

Assuming $t_{ul1}+t_{exe1}+t_{exe3}+t_{exe2}\geq t_{ul3}+t_{exe3}+t_{exe2}$ holds in case 3, the total time delay calculated is $t_{ul1}+t_{exe1}+t_{exe3}+t_{exe2}$, which is obviously greater than the three possible computing delay in case1, thus, case 1 is superior to case 3; 
Assuming $t_{ul1}+t_{exe1}+t_{exe3}+t_{exe2}<t_{ul3}+t_{exe3}+t_{exe2}$is held in case 3, the computing delay of case 3 is $t_{ul3}+t_{exe3}+t_{exe2}$, which is obviously greater than the three possible computing delay of case1, thus case1 is also superior to case3. 
The comparison between case 5 and case 1 is similar. Generally, case 1 is superior to case 5 and \textit{Corollary 1} is proved.

\end{proof}

\begin{Corollary}
 It can be deduced that the total computational delay of FCFS is minimized with $N+1$ tasks in the core if the $N$ tasks-case hold.
\end{Corollary}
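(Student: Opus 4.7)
The plan is to establish the claim by induction on the task count $N$, with Lemma 1, Lemma 2, and Corollary 1 providing the building blocks. The base case $N=3$ is exactly Corollary 1; assume the FCFS claim holds for some $N\ge 3$. Consider $N+1$ tasks assigned to a single core, indexed so that $t_{ul1}\le t_{ul2}\le\ldots\le t_{ul,N+1}$, and let $Q$ denote any candidate computing queue.

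The first step is to argue that, in any optimal $Q$, the final position must be occupied by task $N+1$, the latest-arriving one. To show this, I would use an exchange argument generalizing the pairwise inequality in Lemma 1: if task $N+1$ sits anywhere other than the tail, I repeatedly swap it with its immediate successor. Each such swap exchanges a consecutive pair in which the later-arriving task precedes the earlier-arriving one, and applying exactly the case analysis used in the proof of Lemma 1 --- but with the zero starting time replaced by the completion time $C_0$ handed off by the tasks ahead of the pair --- the completion time of the pair is non-increasing. Since subsequent tasks' completion times depend monotonically on this handoff time through the recursion $F_{rec}$ of Eq.~\eqref{eq4}, the total core delay does not grow.

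Once task $N+1$ is pinned in the last position, the total delay reduces to $\max(C_N, t_{ul,N+1})+t_{exe,N+1}$, where $C_N$ is the completion time of the sub-queue on tasks $1,\ldots,N$. This expression is monotone non-decreasing in $C_N$, so minimizing the whole core delay is equivalent to minimizing $C_N$. The latter is precisely the $N$-task problem for tasks $1,\ldots,N$, whose arrival times already satisfy $t_{ul1}\le\ldots\le t_{ulN}$, and by the induction hypothesis $C_N$ is minimized by the FCFS order $\textrm{Task }1\to\textrm{Task }2\to\ldots\to\textrm{Task }N$. Concatenating with task $N+1$ yields FCFS on all $N+1$ tasks, closing the induction. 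Lemma 2 serves as the three-task instance of this same monotonicity principle and motivates its general form.

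The main obstacle is making the first step fully rigorous, since Lemma 1 as proved assumes the pair begins processing at time zero, whereas an adjacent swap inside a longer queue has to contend with an arbitrary precompletion time $C_0$ and with downstream tasks whose arrival times may interact nontrivially with the handoff. I would resolve this with a short case distinction on the relative ordering of $C_0$, $t_{ul,A}$ and $t_{ul,B}$ for the swapped pair $(A,B)$, each sub-case collapsing to an inequality of the form already established in Lemma 1, and then verify that the monotonicity of the handoff propagates forward through $F_{rec}$ so that no task further down the queue has its completion time increased by the swap.
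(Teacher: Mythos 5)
Your proposal is correct, but it takes a genuinely different route from the paper's proof. The paper classifies all orderings of the $N+1$ tasks by their final task, obtaining $N+1$ classes; within the class ending in Task $i$ it invokes Lemma 2 together with the $N$-task induction hypothesis to single out the best representative, namely Task $1\rightarrow\cdots\rightarrow$ Task $i-1\rightarrow$ Task $i+1\rightarrow\cdots\rightarrow$ Task $N+1\rightarrow$ Task $i$, and then compares these $N+1$ representatives with FCFS by merging contiguous FCFS-ordered blocks into single ``overall tasks,'' so each comparison collapses to the two-task case (Lemma 1) or the three-task case (Corollary 1). You instead pin the latest-arriving task at the tail by adjacent swaps and then reduce, via monotonicity of the handoff time through $F_{rec}$, to the $N$-task problem on the prefix, which the induction hypothesis settles. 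The technical extension you flag as the main obstacle does go through: for a swapped pair with $t_{ul,Y}\le t_{ul,X}$ and core-availability time $C_0$, the pair's completion satisfies $\max\left(C_0+t_{exe,Y}+t_{exe,X},\ t_{ul,Y}+t_{exe,Y}+t_{exe,X},\ t_{ul,X}+t_{exe,X}\right)\le\max\left(C_0+t_{exe,X}+t_{exe,Y},\ t_{ul,X}+t_{exe,X}+t_{exe,Y},\ t_{ul,Y}+t_{exe,Y}\right)$ term by term, and the recursion is monotone in the handoff, so downstream completions cannot increase. Comparing the two: your exchange argument is more self-contained (it never needs to treat a multi-task block as a single task, a step the paper leaves implicit and which requires assigning a block an effective arrival and execution time), while the paper's route stays closer to its stated Lemmas and only has to compare $N+1$ candidate queues. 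One small correction of wording: your swap argument establishes that \emph{some} optimal queue ends with Task $N+1$ (moving it to the tail never increases delay), not that every optimal queue must end with it; the weaker existence statement is all your induction needs.
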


\begin{proof}
 When the computing queue of all tasks is classified as ending with Task $1 \sim N+1$, we can obtain $N+1$ types of sequences, and each type contains N sequences of permutations. Taking Task$i$ as an example $(2\leq i\leq N)$,  the optimal sequence in this type can be obtained from \textit{Lemma 2}; when it exists that FCFS minimize the total delay of core computation with $N$ tasks, there will be a Task 1$\rightarrow$...$\rightarrow$Task$i-1$$\rightarrow$Task$i+1$$\rightarrow$...$\rightarrow$Task$N+1$$\rightarrow$Task$i$ s the optimal queue ending with Task$i$, so only $N+1$ possible cases need to be compared.
\begin{center}

{\scriptsize
	\begin{tabular}{ccc}
		&Case 1:& Task 2$\rightarrow$Task 3$\rightarrow$…$\rightarrow$TaskN+1$\rightarrow$Task 1  \\
		&Case 2:& Task 1$\rightarrow$Task 3$\rightarrow$…$\rightarrow$TaskN+1$\rightarrow$Task 2  \\
		&$\cdots$& $\cdots$  \\
		&FCFS:& Task 1$\rightarrow$Task 2$\rightarrow$…$\rightarrow$TaskN$\rightarrow$TaskN+1  \\
	\end{tabular}
}
\end{center}
When comparing the case $2 \sim N$ with FCFS, Task 1\\$\rightarrow$...$\rightarrow$Task$i-1$ in case $i$$(2\leq i\leq N)$ is considered as one overall task, and Task$i+1$$\rightarrow$...$\rightarrow$Task$N+1$ is another overall task, the problem degenerates to a 3-task model, and that FCFS is the sequence with minimum total core computation delay can be concluded from the \textit{Corollary 1}. In the comparison between case 1 and FCFS, Task 2$\rightarrow$Task 3$\rightarrow$...$\rightarrow$Task$N+1$ is considered as an overall task, then the problem degenerates to a 2-task model, and by \textit{Lemma 1}, FCFS leads to minimum computing delay.
\end{proof}

Therefore, the FCFS sequence bringing the minimum computation delay is been proved. And the computation strategy of the core should be FCFS. 

\section{Evaluation}

\begin{figure*}[tbp]
\centerline{\includegraphics[width=0.9\textwidth]{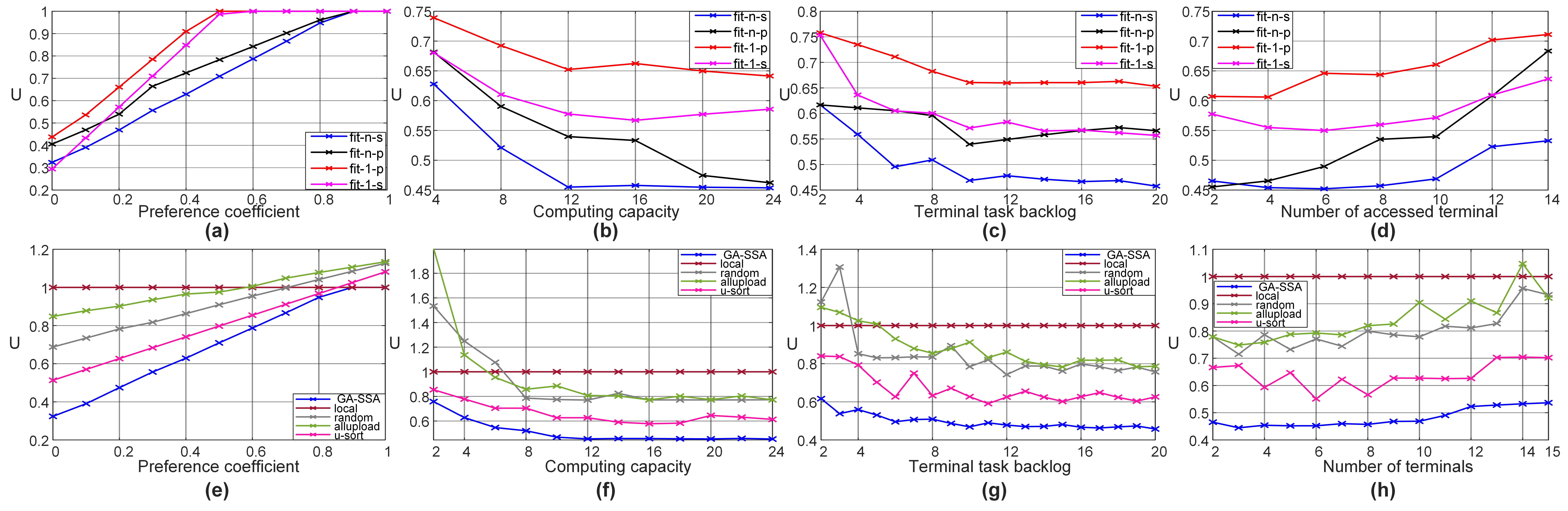}}
\caption{Simulation of 4 different models and algorithms. (a) relation between utility index U and preference coefficient $\boldsymbol{\lambda}$. (b) relation between utility index U and the number of cores (equivalent computing capacity). (c) relation between utility index U and the number of tasks (Terminal task backlog). (d) relation between utility index U and the number of accessed terminals. (e) relation between utility index U and preference coefficient $\boldsymbol{\lambda}$. (f) relation between utility index U and number of cores (equivalent computing capacity). (g) relation between utility index U and number of tasks (Terminal task backlog). (h) relation between utility index U and number of accessed terminals.}
\label{fig4}
\end{figure*}

\subsection{Model Comparison}

The experiment concentrates on the optimization level of different MEC models using GA-SSA algorithm. The following baselines are included in the comparison: (1) Multi-core parallel scheduling model under limited channel resources (n-p), (2) Single-core parallel scheduling model with limited channel resources (1-p), (3) Single-core sequential scheduling model with limited channel resources (1-s), (4) Proposed CCMSM (n-s).

Among them, the computing capacity of the MEC server in the 4 models is equal. The experiment is carried out in 4 different scenarios which focus on preference, computing capacity, terminal tasks backlog and access count.

In general, CCMSM showed superiority compared with other models and exhibits better robustness in every MEC deployment scenario in the experiment. For most optimization preferences, the proposed CCMSM can achieve a wider preference range with $\lambda \in [0,0.9]$ in Fig.~\ref{fig4} (a). Therefore, to fall in the valid range of models, $\lambda$ is uniformly set to 0.4 in other scenarios. Fig.~\ref{fig4} (b) demonstrates the superiority of the CCMSM becomes increasingly evident with less computing capacity with the model outperforming the suboptimal model by $7.71\%$ in terms of optimization level. A similar trend is observed in Fig.~\ref{fig4} (c-d), where the advantage of the CCMSM over the suboptimal model increases to $17.81\%$ and $16.29\%$ as the backlog and access increase, respectively. These phenomena indicate that the proposed CCMSM exhibits resilient scalability for traffic spikes with computing resource scarcity with an overall improvement in optimization level of $13.94\%$ over the suboptimal model.

\subsection{Algorithm in CCMSM}

Although CCMSM is more preeminent than other models and exhibits greater verisimilitude, it has higher complexity in optimization. We propose GA-SSA as a competent algorithm and compare it with the following offloading schemes:(1) First come first serve (FCFS), (2) Offloading all (allupload), (3) Random Offloading (random), (4) Terminal Computing (local), (5) Proposed GA-SSA (GA-SSA).

The simulation results depicted in Fig.~\ref{fig4} (e-h) demonstrate that the GA-SSA exhibits superior performance over other offloading schemes. To facilitate the follow-up experiment, the preference coefficient is set to a median value of 0.5. Notably, Fig.~\ref{fig4} (f-h) shows a slow fluctuation trend of the GA-SSA demonstrating superior stability in optimization which is respectively reflected in the standard deviation of $0.087$, $0.048$ and $0.027$. Therefore, the GA-SSA algorithm represents a competent and robust algorithm for optimization in CCMSM.

\section{Conclusion}

In this paper, we proposed a channel-constrained multi-core sequential model (CCMSM) for congestion management (CM) of 6G vehicular networks. Furthermore, a novel utility index was designed as the objective for cost optimization, and the decoupled problem was solved using a combination of genetic algorithm and sparrow search algorithm (GA-SSA), which proved to be a competent and effective approach. The simulation demonstrated that CCMSM can achieve a more refined optimization level in several MEC deployment scenarios compared with the existing models. Additionally, it exhibited better robustness and resilient scalability for traffic spikes.


\begin{thebibliography}{00}
\bibitem{b1} G. Kirubasri, S. Sankar, D. Pandey, B. K. Pandey, H. Singh and R. Anand, "A Recent Survey on 6G Vehicular Technology, Applications and Challenges," 2021 9th International Conference on Reliability, Infocom Technologies and Optimization (Trends and Future Directions) (ICRITO), Noida, India, 2021, pp. 1-5, doi: 10.1109/ICRITO51393.2021.9596147.

\bibitem{b2}
X. Wang and S. Cai, "Mobility Management in Vehicular Named Data Networking," in IEEE Sensors Letters, vol. 5, no. 8, pp. 1-4, Aug. 2021, Art no. 7500804, doi: 10.1109/LSENS.2021.3095259.

% \bibitem{b3}
% N. Mohammed and R. A. Kadhim, "A Survey of Congestion Control in Vehicular Ad-Hoc Networks(VANET)," 2022 Iraqi International Conference on Communication and Information Technologies (IICCIT), Basrah, Iraq, 2022, pp. 20-25, doi: 10.1109/IICCIT55816.2022.10010504.

\bibitem{b35} 
A. Traspadini, M. Giordani and M. Zorzi, "UAV/HAP-Assisted Vehicular Edge Computing in 6G: Where and What to Offload?," 2022 Joint European Conference on Networks and Communications \& 6G Summit (EuCNC/6G Summit), Grenoble, France, 2022, pp. 178-183, doi: 10.1109/EuCNC/6GSummit54941.2022.9815734.

\bibitem{b36} 
Y. Hui et al., "Secure and Personalized Edge Computing Services in 6G Heterogeneous Vehicular Networks," in IEEE Internet of Things Journal, vol. 9, no. 8, pp. 5920-5931, 15 April15, 2022, doi: 10.1109/JIOT.2021.3065970.

\bibitem{b37} 
J. Wang, K. Zhu and E. Hossain, "Green Internet of Vehicles (IoV) in the 6G Era: Toward Sustainable Vehicular Communications and Networking," in IEEE Transactions on Green Communications and Networking, vol. 6, no. 1, pp. 391-423, March 2022, doi: 10.1109/TGCN.2021.3127923.

\bibitem{b4}
C. Ren, G. Zhang, X. Gu and Y. Li, "Computing Offloading in Vehicular Edge Computing Networks: Full or Partial Offloading?," 2022 IEEE 6th Information Technology and Mechatronics Engineering Conference (ITOEC), Chongqing, China, 2022, pp. 693-698, doi: 10.1109/ITOEC53115.2022.9734360.

% \bibitem{b5}
% Z. Zhang and S. Li, "A Survey of Computational Offloading in Mobile Cloud Computing," 2016 4th IEEE International Conference on Mobile Cloud Computing, Services, and Engineering (MobileCloud), Oxford, UK, 2016, pp. 81-82, doi: 10.1109/MobileCloud.2016.15.

% \bibitem{b7}
% H. Li et al., "Reinforcement Learning Based Cross-Layer Congestion Control for Real-Time Communication," 2022 IEEE International Symposium on Broadband Multimedia Systems and Broadcasting (BMSB), Bilbao, Spain, 2022, pp. 01-06, doi: 10.1109/BMSB55706.2022.9828569.

% \bibitem{b8}
% A. H. Ismail, T. A. Soliman, G. M. Salama, N. A. El-Bahnasawy and H. F. A. Hamed, "Congestion-Aware and Energy-Efficient MEC Model with Low Latency for 5G," 2019 7th International Japan-Africa Conference on Electronics, Communications, and Computations, (JAC-ECC), Alexandria, Egypt, 2019, pp. 156-159, doi: 10.1109/JAC-ECC48896.2019.9051312.

% \bibitem{b13}
% H. Chen, D. Zhao, Q. Chen and R. Chai, "Joint Computation Offloading and Radio Resource Allocations in Wireless Cellular Networks," 2018 10th International Conference on Wireless Communications and Signal Processing (WCSP), Hangzhou, China, 2018, pp. 1-6, doi: 10.1109/WCSP.2018.8555588.

\bibitem{b14}
Li, Z.; Zhu, Q. Genetic Algorithm-Based Optimization of Offloading and Resource Allocation in Mobile-Edge Computing. Information 2020, 11, 83. https://doi.org/10.3390/info11020083

\bibitem{b20}
W. K. Lai, C. -S. Shieh and Y. -P. Chen, "Task Scheduling With Multicore Edge Computing in Dense Small Cell Networks," in IEEE Access, vol. 9, pp. 141223-141232, 2021, doi: 10.1109/ACCESS.2021.3119874.

\bibitem{b6}
B. Rubbens and C. -W. Huang, "Adaptive Congestion Control for Massive Machine Type Communications in Cellular Network," 2020 IEEE International Conference on Consumer Electronics - Taiwan (ICCE-Taiwan), Taoyuan, Taiwan, 2020, pp. 1-2, doi: 10.1109/ICCE-Taiwan49838.2020.9258209.

% \bibitem{b21}
% T. Ma, M. Wang and W. Zhao, "Task scheduling considering multiple constraints in mobile edge computing," 2021 International Conference on Intelligent Computing, Automation and Systems (ICICAS), Chongqing, China, 2021, pp. 43-47, doi: 10.1109/ICICAS53977.2021.00016.

\bibitem{b22}
J. He, "Optimization of Edge Delay Sensitive Task Scheduling Based on Genetic Algorithm," 2022 International Conference on Algorithms, Data Mining, and Information Technology (ADMIT), Xi'an, China, 2022, pp. 155-159, doi: 10.1109/ADMIT57209.2022.00032.

% \bibitem{b23}
% B. M. Varghese and R. J. S. Raj, "A survey on variants of genetic algorithm for scheduling workflow of tasks," 2016 Second International Conference on Science Technology Engineering and Management (ICONSTEM), Chennai, India, 2016, pp. 489-492, doi: 10.1109/ICONSTEM.2016.7560870.

\bibitem{b24}
M. A. Tolba, R. V. Bulatov and M. V. Burmeyster, "A Robust Methodology Approach Based Sparrow Search Algorithm for the Incorporation of Rdgs to Improve the Distribution Grid Performance," 2021 International Ural Conference on Electrical Power Engineering (UralCon), Magnitogorsk, Russian Federation, 2021, pp. 346-352, doi: 10.1109/UralCon52005.2021.9559513.

\bibitem{b25}
C. Ren, G. Zhang, X. Gu and Y. Li, "Computing Offloading in Vehicular Edge Computing Networks: Full or Partial Offloading?," 2022 IEEE 6th Information Technology and Mechatronics Engineering Conference (ITOEC), Chongqing, China, 2022, pp. 693-698, doi: 10.1109/ITOEC53115.2022.9734360.

% \bibitem{b26}
% Y. Mao, J. Zhang and K. B. Letaief, "Joint Task Offloading Scheduling and Transmit Power Allocation for Mobile-Edge Computing Systems," 2017 IEEE Wireless Communications and Networking Conference (WCNC), San Francisco, CA, USA, 2017, pp. 1-6, doi: 10.1109/WCNC.2017.7925615.

\bibitem{b27}
P. Zhao, H. Tian, C. Qin and G. Nie, "Energy-Saving Offloading by Jointly Allocating Radio and Computational Resources for Mobile Edge Computing," in IEEE Access, vol. 5, pp. 11255-11268, 2017, doi: 10.1109/ACCESS.2017.2710056.

\bibitem{b28}
D. Fooladivanda and C. Rosenberg, "Joint Resource Allocation and User Association for Heterogeneous Wireless Cellular Networks," in IEEE Transactions on Wireless Communications, vol. 12, no. 1, pp. 248-257, January 2013, doi: 10.1109/TWC.2012.121112.120018.

% \bibitem{b29}
% P. Zhao and G. Dán, "Joint Resource Dimensioning and Placement for Dependable Virtualized Services in Mobile Edge Clouds," in IEEE Transactions on Mobile Computing, vol. 21, no. 10, pp. 3656-3669, 1 Oct. 2022, doi: 10.1109/TMC.2021.3060118.

% \bibitem{b30}
% L. Zhang, Y. Sun, Z. Chen and S. Roy, "Communications-Caching-Computing Resource Allocation for Bidirectional Data Computation in Mobile Edge Networks," in IEEE Transactions on Communications, vol. 69, no. 3, pp. 1496-1509, March 2021, doi: 10.1109/TCOMM.2020.3041343.

\bibitem{b31}
S. Mao, J. Wu, L. Liu, D. Lan and A. Taherkordi, "Energy-Efficient Cooperative Communication and Computation for Wireless Powered Mobile-Edge Computing," in IEEE Systems Journal, vol. 16, no. 1, pp. 287-298, March 2022, doi: 10.1109/JSYST.2020.3020474.

\bibitem{b32}
U. Saleem, Y. Liu, S. Jangsher, Y. Li and T. Jiang, "Mobility-Aware Joint Task Scheduling and Resource Allocation for Cooperative Mobile Edge Computing," in IEEE Transactions on Wireless Communications, vol. 20, no. 1, pp. 360-374, Jan. 2021, doi: 10.1109/TWC.2020.3024538.

% \bibitem{b33}
% B. Xiang, J. Elias, F. Martignon and E. Di Nitto, "Joint Network Slicing and Mobile Edge Computing in 5G Networks," ICC 2019 - 2019 IEEE International Conference on Communications (ICC), Shanghai, China, 2019, pp. 1-7, doi: 10.1109/ICC.2019.8762029.



\end{thebibliography}
\end{document}